\documentclass[letterpaper, 10 pt, conference]{ieeeconf}  
\IEEEoverridecommandlockouts                              

\overrideIEEEmargins                                      


\usepackage{amsmath} 
\usepackage{amssymb}  

\hyphenation{op-tical net-works semi-conduc-tor}
\usepackage{tabulary}
\usepackage[english]{babel}
\usepackage{blindtext}
\usepackage[font=footnotesize]{caption}
\usepackage[makeroom]{cancel}
\usepackage{amsfonts}
\usepackage{subfig,pgfplots}
\usepackage{amsthm}  
\usepackage{amssymb}
\usepackage{graphicx}
\usepackage{pgf,tikz}
\pgfplotsset{ 
  compat=newest, 
   legend style =
  {font=\footnotesize },
  label style = {font=\footnotesize},
every tick label/.append style={font=\footnotesize}
  }
\usepackage{tabularx}
\usepackage{float}
\usepackage{cite}
\pagestyle{empty}
\usepgfplotslibrary{fillbetween}
  \usepgfplotslibrary{ternary}
\usetikzlibrary{shapes.multipart}
\usepackage{enumitem} 
\usepackage{bbm,xurl} 

\usepackage[strict]{changepage}

\allowdisplaybreaks

\renewcommand{\eqref}[1]{Eq.~(\ref{#1})}  

\newtheorem{remark}{Remark}
\newtheorem{theorem}{Theorem}
\newtheorem{lemma}{Lemma}
\newtheorem{assumption}{Assumption}

\newtheorem{corollary}{Corollary}
\newtheorem{proposition}{Proposition}

\setlength{\belowcaptionskip}{-15pt}

\title{\LARGE \bf
A Human-Vector Susceptible--Infected--Susceptible Model for Analyzing and Controlling the Spread of Vector-Borne Diseases}

\author{Lorenzo Zino, Alessandro Casu, and Alessandro Rizzo
\thanks{The authors are with the Department of Electronics and Telecommunications, Politecnico di Torino, Torino, Italy (\texttt{lorenzo.zino@polito.it, alessandro.casu@studenti.polito.it, alessandro.rizzo@polito.it}).}}

\begin{document}

\maketitle
\thispagestyle{empty}

\begin{abstract}
We propose an epidemic model for the spread of vector-borne diseases. The model, which is built extending the classical susceptible--infected--susceptible model, accounts for two populations ---humans and vectors--- and for cross-contagion between the two species, whereby humans become infected upon interaction with carrier vectors, and vectors become carriers after interaction with infected humans. We formulate the model as a system of ordinary differential equations and  leverage monotone systems theory to rigorously characterize the epidemic dynamics. Specifically, we characterize the global asymptotic behavior of the disease, determining conditions for quick eradication of the disease (i.e., for which all trajectories converge to a disease-free equilibrium), or convergence to a (unique) endemic equilibrium. Then, we incorporate two control actions: namely, vector control and incentives to adopt protection measures. Using the derived mathematical tools, we assess the impact of these two control actions and determine the optimal control policy. 
\end{abstract}

\section{Introduction}\label{sec:intro}

In the last decade, mathematical models of epidemic diseases have gained traction within the systems and control community~\cite{Nowzari2016,Mei2017,Pare2020review,zino2021survey,Ye2023competitive}. In fact, the development of increasingly refined models has allowed to accurately predict the course of an epidemic outbreak and, ultimately, to design and assess intervention policies~\cite{Nowzari2016,preciado2018vaccine,zino2021survey}. In particular, the latest epidemiological threats, such as the outbreaks of Ebola, COVID-19, and seasonal flu have provided further motivation to pursue these studies, yielding tailored versions of these general epidemic models~\cite{Rizzo2016Ebola,giulia,Carli2020,parino2021,fiandrino2025}. 

Typical modeling setups deal with human-to-human contagion mechanisms~\cite{Nowzari2016,Mei2017,Pare2020review,zino2021survey,Ye2023competitive}. However, according to the World Health Organization,  more than 17\% of all infectious diseases are vector-borne~\cite{WHOvector}. This means that they are not transmitted through human-to-human interactions, but by arthropod vectors (such as mosquitoes, fleas, or ticks) that can carry pathogens and transmit them to humans~\cite{WHOvector}. Vector-borne diseases (including dengue, malaria, and West Nile fever) pose a significant threat to our society, being causing more than 700,000 deaths annually~\cite{WHOvector}. Moreover, the ongoing climate change crisis exacerbates concerns on the prevention of these diseases, as vectors adapt to new habitats~\cite{Rocklv2020}. This is the case, e.g., of \textit{Aedes aegypti} (responsible for the transmission of several diseases, including dengue, Zika, and yellow fever), which is predicted to infest many regions of Europe if the temperature increases by  2\textdegree C~\cite{LiuHelmersson2019}.

Numerous mathematical models of vector-borne diseases have been proposed and studied, particularly in response to the increasing concern for dengue fever, leading to a rich body of research~\cite{Aguiar2022,Ogunlade2023}. However, most of these models, developed by computational epidemiologists as complex simulation tools, offer limited analytical tractability~\cite{Keeling2011}. Conversely, there is a scarcity of parsimonious models that efficiently balance accuracy and interpretability.

Here, we fill in this gap by proposing a novel mathematical model for vector-borne diseases. Our model, grounded in dynamical systems theory, considers two interacting populations of humans and vectors. Through such interactions, the pathogen is transmitted from carrier vectors to susceptible humans and from infectious humans to vectors, establishing a positive feedback loop of contagion. Formally, we cast our model as a system of nonlinear ordinary differential equations (ODEs), in which we couple i) an epidemic model for humans, inspired by the Susceptible--Infected--Susceptible (SIS) model~\cite{Mei2017}, ii) a contagion model for vectors, inspired by the Susceptible--Infected (SI) model~\cite{Mei2017}, and iii) a vital dynamics for vectors, which is modeled using a birth-death process~\cite{Murray1993}. We refer to the model obtained as the human-vector SIS (HV-SIS) epidemic model.

In addition to the formulation of the model, the main contribution of this paper is twofold. First, by leveraging monotone dynamical systems theory~\cite{Hirsch2006}, we perform a thorough analysis of the asymptotic behavior of the HV-SIS model, characterizing two regimes: one where the epidemic outbreak is quickly eradicated, leading to global convergence to a disease-free equilibrium; and one where the disease becomes endemic, and the system converges to a (unique) endemic equilibrium. Second, we introduce two control actions: namely, vector control ---which focuses on reducing the vector population of vectors (e.g., using pesticides)~\cite{Wilson2020}--- and the use of personal protection measures against contagion~\cite{Alpern2016}. By studying the controlled HV-SIS model and formulating an optimization problem, we investigate the optimal control policies to prevent outbreaks of vector-borne diseases, as a function of the model parameters and the cost associated with implementing interventions.

\section{Human-Vector SIS Epidemic Model}\label{sec:model}

We consider a large population of humans that interact with a population of vectors. Similar to most epidemic models~\cite{Mei2017}, we observe that the duration of an epidemic outbreak is typically negligible with respect to the life-span of humans. Hence, we approximate the size of the human population as constant. Moreover, being the population large, we approximate it as a continuum of individuals with total mass equal to $1$~\cite{Mei2017}. On the contrary, the life-span of a vector is typically comparable with the infection propagation dynamics~\cite{Loureno2010}. Hence, we assume that the total quantity of vectors $v(t)\geq 0$ (normalized with respect to the unit mass human population) evolves in continuous-time $t\geq 0$ according to a classical ODE associated with a birth-death process, typically used in mathematical biological models~\cite{Murray1993}:
\begin{equation}\label{eq:bd}
    \dot v(t)=\omega-\mu v(t),
\end{equation}
where $\omega> 0$ and $\mu >0$ are two constants representing the birth and death rate, respectively.

Humans can be healthy and susceptible to the disease or infected with the disease. We assume that there is no natural immunity: after recovery, individuals are again susceptible to the disease. This is a good proxy for many diseases, e.g., dengue fever, for which natural immunity wanes quickly and it only protects against the virus serotype specific of the previous infection~\cite{WHOdengue}. We denote by $x(t)\in[0,1]$ and $s(t)\in[0,1]$ the fraction of infected individuals and susceptible individuals at time $t\geq 0$, respectively. Since there is no immunity, it holds $s(t)=1-x(t)$. Similarly, vectors can be either carriers of the pathogen or non-carriers. We denote by $y(t)\geq 0$ the number of non-carrier vectors and by $z(t)\geq 0$ the quantity of carrier vectors. Being $v(t)$ the total quantity of vectors at time $t$, then $y(t)+z(t)=v(t)$. 

\begin{figure}
    \centering
\includegraphics[]{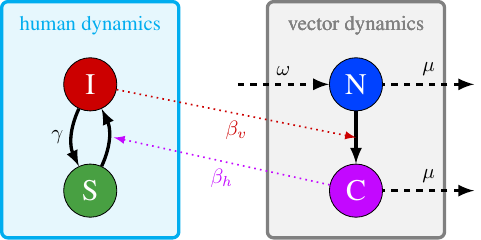}    \caption{Schematic of the human-vector epidemic model. Solid arrows represent possible transitions of the state of humans (S and I for susceptible and infected, respectively) and vectors (N and C for non-carrier and carrier, respectively). Dashed arrows are associated with the vital dynamics of vectors. Dotted colored arrows indicate transitions that are triggered by interactions with humans or vectors with a specific state. }
    \label{fig:schematic}
\end{figure}

We assume that the two populations are well-mixed, and we define a human-vector compartmental model that describes the evolution of the fraction of susceptible and infected individuals and the quantity of carriers and non-carriers in the two populations. The compartmental model, illustrated in Fig.~\ref{fig:schematic}, yields the following 3-dimensional system of nonlinear ODEs:
\begin{subequations}\label{eq:model}
    \begin{align}
        &\dot{x}(t) = -\gamma x(t) + \beta_h (1-x(t)) z(t)\label{eq:model_x}\\
        &\dot{y}(t) = \omega - \mu y(t) - \beta_v x(t) y(t)\label{eq:model_y}\\
        &\dot{z}(t) = \beta_v x(t) y(t) - \mu z(t)\label{eq:model_z},
    \end{align}
\end{subequations}
with initial condition in the domain $\mathcal D:=\{(x,y,z): x,y,z\geq 0, x\leq 1\}$. In the following paragraphs, we extensively discuss these equations.

In \eqref{eq:model_x}, the fraction of infected individuals evolves according to two contrasting mechanisms: the negative contribution $-\gamma x(t)$ accounts for infected individuals who recover at a rate $\gamma> 0$; the positive contribution $\beta_h (1-x(t)) z(t)$ accounts for new infections, whose number is proportional to the quantity of susceptible humans, the quantity of carriers, and a parameter $\beta_h> 0$ that captures the human contagion rate (i.e., the likelihood that the pathogen is transmitted from a carrier vector to a human through a human-vector interaction). This equation resembles the classical SIS
 epidemic model~\cite{Mei2017}, but here new contagions, 
 instead of being proportional to the quantity of infected humans, are proportional to the quantity of carriers. For this reason, we shall refer to the model with dynamics in \eqref{eq:model} and initial condition in $\mathcal D$ as the human-vector SIS  model, abbreviated as \emph{HV-SIS model}.

The other two equations, Eqs.~(\ref{eq:model_y})--(\ref{eq:model_z}), govern the dynamics of vectors. In particular, the term $\beta_v x(t) y(t)$ captures new carriers and gives a positive contribution to the dynamics of carriers and a negative contribution to non-carriers. This term is proportional to the number of non-carrier vectors, infected humans, and a parameter $\beta_v\geq 0$ that captures the vector contagion rate (i.e., the likeliness that the pathogen is transmitted from a human to a vector through a human-vector interaction). The other two terms come from \eqref{eq:bd}: new born vectors are not carriers of the pathogen (so the rate $\omega$ appears in \eqref{eq:model_y}), while the death rate is independent of the pathogen, since vectors are only carriers and not infected with the disease, leading to the terms $-\mu y(t)$ and $-\mu z(t)$, respectively.

\section{Main Results on the HV-SIS Epidemic Model}\label{sec:analysis}

In this section, we present our main results on the analysis of the HV-SIS epidemic model. First, we prove that the equations are well-defined.

\begin{lemma}\label{lemma:invariance}
The domain of the HV-SIS model $\mathcal D:=\{(x,y,z): x,y,z\geq 0, x\leq 1\}$ can be split into two  domains $\mathcal D_1:=\{(x,y,z): x,y,z\geq 0, x\leq 1, y+z\leq \frac{\omega}{\mu}\}$ and $\mathcal D_2:=\{(x,y,z): x,y,z\geq 0, x\leq 1, y+z\geq \frac{\omega}{\mu}\}$, which are positive invariant under~\eqref{eq:model}. 
\end{lemma}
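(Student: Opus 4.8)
The plan is to invoke the standard flow-invariance (Nagumo) criterion: a closed set is positively invariant under a locally Lipschitz vector field precisely when, at every boundary point, the field is subtangential to the set. The right-hand side of \eqref{eq:model} is polynomial, hence smooth, so this criterion applies. Since both $\mathcal{D}_1$ and $\mathcal{D}_2$ are closed polyhedral regions, I would proceed face by face and check that the dynamics never pushes a trajectory across any active constraint.

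First I would dispatch the constraints shared with $\mathcal{D}$, namely $x\geq 0$, $x\leq 1$, $y\geq 0$, and $z\geq 0$. On each such face the check is an elementary sign computation evaluated on that face: on $\{x=0\}$ we have $\dot x=\beta_h z\geq 0$; on $\{x=1\}$ we have $\dot x=-\gamma<0$; on $\{y=0\}$ we have $\dot y=\omega>0$; and on $\{z=0\}$ we have $\dot z=\beta_v x y\geq 0$. In every case the field points into $\mathcal{D}$ (or is tangent), so these four constraints are individually preserved and $\mathcal{D}$ itself is positively invariant.

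The crux is the remaining constraint $y+z\lessgtr\omega/\mu$ that distinguishes $\mathcal{D}_1$ from $\mathcal{D}_2$. Here I would exploit the key structural observation that the total vector mass $v:=y+z$ decouples from $x$: summing \eqref{eq:model_y} and \eqref{eq:model_z}, the cross-contagion terms $\pm\beta_v x y$ cancel, leaving $\dot v=\omega-\mu v$, which is exactly the scalar birth-death law \eqref{eq:bd}. Its explicit solution $v(t)=\tfrac{\omega}{\mu}+\bigl(v(0)-\tfrac{\omega}{\mu}\bigr)e^{-\mu t}$ preserves the sign of $v-\omega/\mu$ for all $t\geq 0$: if $v(0)\leq\omega/\mu$ then $v(t)\leq\omega/\mu$, and if $v(0)\geq\omega/\mu$ then $v(t)\geq\omega/\mu$. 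This directly yields invariance of both $\{y+z\leq\omega/\mu\}$ and $\{y+z\geq\omega/\mu\}$.

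The main subtlety — and the reason I would not lean on a naive strictly-inward argument for this last face — is that on the separating hyperplane $\{y+z=\omega/\mu\}$ the field is only tangent, not strictly inward: there $\dot v=0$, so the subtangentiality condition $\nabla(y+z)\cdot f=\dot v\leq 0$ for $\mathcal{D}_1$, and $-\dot v\leq 0$ for $\mathcal{D}_2$, hold with equality. Solving the decoupled scalar ODE explicitly, as above, sidesteps this borderline case cleanly. A final point of care is the treatment of edges and corners where several constraints are simultaneously active (for instance $x=0$ together with $y+z=\omega/\mu$); since the coordinate-sign constraints and the $v$-constraint are established by independent arguments, their conjunction is preserved as well, completing the proof for both $\mathcal{D}_1$ and $\mathcal{D}_2$.
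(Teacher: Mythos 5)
Your proof is correct and follows essentially the same route as the paper: Nagumo's subtangentiality criterion checked face by face, combined with the key structural observation that $v=y+z$ obeys the decoupled birth--death law $\dot v=\omega-\mu v$. The only difference is cosmetic: where you solve that scalar ODE explicitly to handle the face $y+z=\omega/\mu$, the paper simply notes that $\dot y+\dot z=0$ (resp.\ $\dot y+\dot z\leq 0$ in $\mathcal D_2$) and concludes directly --- tangency is admissible in Nagumo's condition, so the ``borderline case'' you sidestep needs no special treatment.
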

\begin{proof}
The domain $\mathcal D$ is closed and convex and the vector field in \eqref{eq:model} is Lipschitz-continuous. 
Hence, Nagumo’s Theorem can be applied~\cite{blanchini1999set}. 
We need to verify that the vector field at the boundaries of the domain does not point towards the boundary. We immediately observe that, if any of the variables is equal to $0$, then the corresponding derivative is always non-negative (hence, it does not point towards the boundary). Similarly, at $x=1$, we get that \eqref{eq:model_x} is always non-positive. Finally, when $y+z=
\frac{\omega}{\mu}$, from summing \eqref{eq:model_y} and \eqref{eq:model_z} we get $\dot y+\dot z=0$. Hence, the vector field does not point towards any boundary, yielding the first claim. The second claim follows the same arguments, where we observe that it always holds that $\dot y+\dot z\leq 0$ in $\mathcal D_2$. 
\end{proof}

Then, we provide a complete characterization of the asymptotic behavior of the HV-SIS model, determining its equilibria. Specifically, we will prove that, depending on the model parameters, there is always one equilibrium that is (almost) globally asymptotically stable, characterizing two distinct regimes: either the disease is eradicated and all trajectories converge to a disease-free equilibrium (DFE), or the disease becomes endemic and (almost) all trajectories converge to an endemic equilibrium (EE), where a fraction of the population is infected (and a fraction of the vectors are carriers). The phase transition between these two  regimes, which is a typical phenomenon of many epidemic models~\cite{Mei2017,zino2021survey}, is shaped by the value of the model parameters that determine the so-called \emph{epidemic threshold}~\cite{zino2021survey}. We start our analysis by determining the equilibria of \eqref{eq:model} and determining their (local) stability.

\begin{proposition}\label{prop:equilibria}
The HV-SIS model in \eqref{eq:model} has, at most two equilibria: i) The DFE
    \begin{equation}\label{eq:dfe}
       (x^*,y^*,z^*)=\Big(0, \frac{\omega}{\mu}, 0\Big),
    \end{equation}
and ii) the EE
    \begin{equation}\label{eq:ee}
               (\bar x,\bar y,\bar z)=\Big(\frac{\omega \beta_h \beta_v - \mu^2 \gamma}{\omega \beta_h \beta_v + \mu \gamma \beta_v} , \frac{\gamma \mu + \beta_h \omega}{\beta_h (\beta_v + \mu)}, \frac{\omega \beta_h \beta_v - \mu^2 \gamma}{\mu \beta_h \beta_v + \mu^2 \beta_h}\Big).
    \end{equation}
 Specifically, let us define the epidemic threshold 
    \begin{equation}\label{eq:threshold}
\sigma_0:=\frac{\beta_h \beta_v \omega}{\gamma \mu^2}.
\end{equation}
The DFE in \eqref{eq:dfe} always exists and is locally exponentially stable if $\sigma_0< 1$ and unstable  if $\sigma_0> 1$. The EE in \eqref{eq:ee} exists and is distinct from the DFE if and only if $\sigma_0>1$ and (if it exists) it is always locally exponentially stable.
\end{proposition}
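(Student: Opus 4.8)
The plan is to locate all equilibria by direct substitution, and then settle local stability through the Jacobian, exploiting a decoupling that reduces the endemic case to a two-dimensional problem.

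First I would find the equilibria. Setting the right-hand sides of \eqref{eq:model} to zero, the equations $\dot y=0$ and $\dot z=0$ give $y=\omega/(\mu+\beta_v x)$ and $z=\beta_v x\,\omega/[\mu(\mu+\beta_v x)]$ as functions of $x$. Substituting into $\dot x=0$ yields $\gamma x=\beta_h(1-x)\,z$. The factor $x=0$ recovers the DFE \eqref{eq:dfe}. For $x\neq 0$ one divides by $x$, obtaining a single \emph{linear} equation in $x$ whose unique root is $\bar x=(\omega\beta_h\beta_v-\mu^2\gamma)/(\omega\beta_h\beta_v+\mu\gamma\beta_v)$; back-substitution reproduces $\bar y,\bar z$ in \eqref{eq:ee}. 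Since the denominator is positive, $\bar x>0$ iff $\omega\beta_h\beta_v>\mu^2\gamma$, i.e.\ iff $\sigma_0>1$, while $\bar x<1$ always holds (the numerator is strictly smaller than the denominator because $-\gamma\mu^2<\gamma\mu\beta_v$). This proves that there are at most two equilibria and that the EE lies in $\mathcal D$, distinct from the DFE, exactly when $\sigma_0>1$.

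Next I would treat the DFE. Evaluating the Jacobian of \eqref{eq:model} at $(0,\omega/\mu,0)$ and expanding along the middle column, which has the single nonzero entry $-\mu$, the characteristic polynomial factors as $(\lambda+\mu)\big[\lambda^2+(\gamma+\mu)\lambda+\gamma\mu(1-\sigma_0)\big]$, where I used $\beta_h\beta_v\omega/\mu=\gamma\mu\,\sigma_0$. One eigenvalue is $-\mu<0$; for the quadratic factor the Routh--Hurwitz conditions reduce to $\gamma+\mu>0$ (always true) and $\gamma\mu(1-\sigma_0)>0$. Hence all roots have negative real part iff $\sigma_0<1$, while for $\sigma_0>1$ the negative constant term forces a positive real root, giving instability.

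The main step, and the main obstacle, is the stability of the EE, for which a head-on Routh--Hurwitz computation on the full cubic is algebraically punishing. I would avoid it by noting that the total vector quantity $v:=y+z$ obeys the decoupled linear equation $\dot v=\omega-\mu v$. In coordinates $(x,z,v)$ the Jacobian becomes block triangular, contributing a trivial stable eigenvalue $-\mu$, while the remaining $2\times 2$ block governing $(x,z)$ reads, at $v=\omega/\mu$,
\[
\tilde J=\begin{pmatrix} -\gamma-\beta_h\bar z & \beta_h(1-\bar x)\\ \beta_v(\omega/\mu-\bar z) & -\beta_v\bar x-\mu\end{pmatrix}.
\]
Its trace is manifestly negative. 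For the determinant I would insert the two equilibrium identities $\beta_h(1-\bar x)\bar z=\gamma\bar x$ and $\beta_v\bar x(\omega/\mu-\bar z)=\mu\bar z$: the off-diagonal product collapses to exactly $\gamma\mu$, so that $\det\tilde J=(\gamma+\beta_h\bar z)(\beta_v\bar x+\mu)-\gamma\mu=\gamma\beta_v\bar x+\beta_h\beta_v\bar x\bar z+\mu\beta_h\bar z$, a sum of strictly positive terms. With negative trace and positive determinant the $2\times 2$ block has eigenvalues with negative real parts; together with the eigenvalue $-\mu$, linearization then makes the EE locally exponentially stable whenever it exists. The crux is spotting the $v=y+z$ decoupling and the cancellation to $\gamma\mu$, which turn an unwieldy cubic into two transparent sign checks.
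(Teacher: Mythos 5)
Your proof is correct, and on the decisive step---local stability of the EE---it takes a genuinely different and in fact more self-contained route than the paper. The equilibria computation is essentially the same in substance (the paper solves the three stationarity equations directly; your substitution, which makes the nonzero-$x$ equation linear, is a slightly tidier way to see there are at most two solutions and to read off when the EE lies in $\mathcal D$). For the DFE, the paper computes the three eigenvalues explicitly, square roots and all, whereas your factorization of the characteristic polynomial as $(\lambda+\mu)\bigl[\lambda^2+(\gamma+\mu)\lambda+\gamma\mu(1-\sigma_0)\bigr]$ followed by Routh--Hurwitz reaches the same threshold condition with far less algebra. The real divergence is at the EE: the paper evaluates the full $3\times 3$ Jacobian at the endemic point, reports $\lambda_1=-\mu$, and explicitly omits the remaining computation (``a complicated condition which can be simplified to $\sigma_0>1$\,\dots\ computations are omitted due to space constraints''). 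You instead pass to the coordinates $(x,z,v)$ with $v=y+z$---the very change of variables the paper introduces only later, in the proof of Theorem~\ref{th:uncontrolled}, to obtain monotonicity---so that the Jacobian becomes block triangular, and the question reduces to a trace/determinant check on a $2\times 2$ block, with the off-diagonal product collapsing to exactly $\gamma\mu$ via the equilibrium identities. This buys a complete, verifiable argument precisely where the paper leaves one implicit. Two small points are worth making explicit for full rigor: the change of variables $(x,y,z)\mapsto(x,z,y+z)$ is linear and invertible, so the spectrum of the linearization is unchanged and stability transfers back to the original coordinates; and the divisions by $\bar x$ and $\bar z$ used in the cancellation to $\gamma\mu$ are legitimate because both quantities are strictly positive exactly when the EE exists, i.e., when $\sigma_0>1$.
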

\begin{proof}
First, we compute the equilibria of \eqref{eq:model} by equating the right hand sides to $0$,  obtaining a system of three nonlinear equations, which yields the two solutions in \eqref{eq:dfe} and \eqref{eq:ee}. Then, we observe that the DFE is always in the domain $\mathcal D$. On the contrary, the EE is in the domain $\mathcal D$ if and only if the numerators of $\bar x$ and $\bar z$ are non-negative, i.e., if $\omega \beta_h \beta_v - \mu^2 \gamma\geq 0$, which yield the condition $\frac{\omega \beta_h \beta_v}{\mu^2 \gamma}\geq 1$. Finally, we observe that, when $\frac{\omega \beta_h \beta_v}{\mu^2 \gamma}= 1$, the DFE and the EE coincide, yielding the strict inequality for the existence of a second equilibrium of \eqref{eq:model}.

At this stage, we compute the Jacobian matrix of \eqref{eq:model} in a generic point $(x,y,z)$, that is, \begin{equation}\label{eq:jacobian}
    J(x,y,z) = \begin{bmatrix}
    -\gamma - \beta_h x & 0 & \beta_h (1-x) \\
    -\beta_v y & -\mu - \beta_v x & 0 \\
    \beta_v y & \beta_v x & -\mu
\end{bmatrix}.
\end{equation}
By evaluating \eqref{eq:jacobian} at the DFE in \eqref{eq:dfe}, we get
 \begin{equation}
        J( x^*,y^*, z^*) = \begin{bmatrix}
    -\gamma & 0 & \beta_h  \\
    -\beta_v \frac{\omega}{\mu} & -\mu  & 0 \\
    \beta_v \frac{\omega}{\mu} & 0 & -\mu\end{bmatrix}, 
 \end{equation}
whose eigenvalues are $\lambda_1 = -\mu$ and $\lambda_{2} = \frac{1}{2\mu}(-\gamma \mu - \mu^2 - \sqrt{\mu}\sqrt{\gamma^2 \mu - 2\gamma \mu^2 + \mu^3 + 4\beta_h \beta_v \omega})$, whose real parts are always negative, and  $\lambda_{3} = \frac{1}{2\mu}(-\gamma \mu - \mu^2 + \sqrt{\gamma^2 \mu^2 - 2\gamma\mu^3 + \mu^4 + 4\beta_h \beta_v \omega\mu})$, which is negative if and only if $\gamma^2 \mu^2 - 2\gamma\mu^3 + \mu^4 + 4\beta_h \beta_v \omega\mu<(\gamma \mu + \mu^2)^2  $, which simplifies to the condition $\sigma_0=\frac{\beta_v \beta_h \omega}{\gamma \mu^2} < 1$. Hence the DFE is locally exponentially stable if $\sigma_0 < 1$ and unstable if  $\sigma_0 > 1$.

Similarly, we evaluate the Jacobian matrix in \eqref{eq:jacobian} at the EE in \eqref{eq:ee}, obtaining
\begin{equation}
\begin{bmatrix}
    -\frac{\gamma \mu \beta_v + \omega \beta_v \beta_h}{\mu (\mu + \beta_v)} & 0 & \frac{\mu \gamma \beta_v \beta_h + \mu^2 \gamma \beta_h}{\omega \beta_v \beta_h + \mu \gamma \beta_v} \\
    -\frac{\beta_v \gamma \mu + \beta_v \beta_h \omega}{\beta_h (\beta_v + \mu)} & -\mu - \frac{\omega \beta_v \beta_h - \mu^2 \gamma}{\omega \beta_h + \mu \gamma} & 0 \\
    \frac{\beta_v \gamma \mu + \beta_v \beta_h \omega}{\beta_h (\beta_v + \mu)} & \frac{\omega \beta_v \beta_h - \mu^2 \gamma}{\omega \beta_h + \mu \gamma} & -\mu
\end{bmatrix},
\end{equation}
whose eigenvalues are $\lambda_1=-\mu$, which is always negative, and another pair of eigenvalues, which are not reported due to their cumbersome expression. Again, by imposing that the largest of the two has negative real part, we obtain a complicated condition which can be simplified to $\sigma_0>1$, where computations are omitted due to space constraints.
\end{proof}

\begin{remark}\label{rem:threshold}
    From the expression of the epidemic threshold in \eqref{eq:threshold}, we observe that, as predictable, increasing the infection rates $\beta_h$ and $\beta_v$ favors the spread of the disease. A similar effect is observed by  increasing the vector birth rate $\omega$. On the other hand, increasing the vector death rate $\mu$ and/or the human recovery rate $\gamma$ favors the eradication of the disease. Interestingly, the vector death rate $\mu$ has a larger impact, since it appears squared at the denominator, suggesting that vector control is a potentially effective strategy to avoid outbreaks of vector-borne diseases. 
\end{remark}

Proposition~\ref{prop:equilibria} characterizes the local behavior of \eqref{eq:model} about the two equilibria of the system. In order to prove global convergence, we now leverage monotone systems theory~\cite{Hirsch2006}. However, since the Jacobian of \eqref{eq:model} in \eqref{eq:jacobian} is evidently not a Metzler matrix, we cannot  directly apply the  monotone systems theory to \eqref{eq:model}, and we need to introduce a change of variables, as detailed in the proof of the following result.

\begin{theorem}\label{th:uncontrolled}
Let $\sigma_0$ be the epidemic threshold from \eqref{eq:threshold}. If $\sigma_0\leq 1$, then all trajectories of the HV-SIS model in \eqref{eq:model} converge to the DFE in \eqref{eq:dfe}. If $\sigma_0> 1$, then all trajectories with initial condition such that $x(0)\neq 0$ or $z(0)\neq 0$ converge to the EE in \eqref{eq:ee}.
\end{theorem}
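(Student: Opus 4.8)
The plan is to make the dynamics cooperative through the change of variables $(x,y,z)\mapsto(x,z,v)$, where $v:=y+z$ is the total vector quantity, and then to combine monotone systems theory with the fact that $v$ obeys a decoupled, globally convergent scalar equation. Summing \eqref{eq:model_y} and \eqref{eq:model_z} gives $\dot v=\omega-\mu v$, so $v(t)\to\omega/\mu$ exponentially and independently of $(x,z)$. Rewriting \eqref{eq:model_z} as $\dot z=\beta_v x\,(v-z)-\mu z$ and retaining \eqref{eq:model_x}, the Jacobian of the $(x,z,v)$-system has off-diagonal entries $\beta_h(1-x)$, $\beta_v(v-z)=\beta_v y$, and $\beta_v x$, with zeros elsewhere. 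On the invariant domain $\mathcal D$ these are all non-negative because $x\le1$ and $y\ge0$, so the transformed system is cooperative---unlike the original, whose Jacobian \eqref{eq:jacobian} fails to be Metzler.

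Since the $v$-equation is decoupled and globally exponentially stable at $\omega/\mu$, I would next regard the $(x,z)$-dynamics as an asymptotically autonomous planar system whose limit equation is obtained by fixing $v=\omega/\mu$. On $\mathcal D$ this limit system is strongly cooperative, its $2\times2$ Jacobian being irreducible and Metzler in the interior; hence, by the Poincar\'e--Bendixson theory for monotone planar flows, it admits no nonconstant periodic orbit and every bounded trajectory converges to an equilibrium. The theory of asymptotically autonomous systems then transfers this convergence to the genuine $(x,z)$-trajectories, and reconstructing $y=v-z$ recovers convergence in the original coordinates.

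It remains to identify the limit via Proposition~\ref{prop:equilibria}. If $\sigma_0\le1$, the DFE is the unique equilibrium in $\mathcal D$, so every trajectory converges to it. If $\sigma_0>1$, the planar system has exactly two equilibria: the EE, which is stable, and the DFE, whose $2\times2$ Jacobian has determinant $(\gamma\mu^2-\beta_h\beta_v\omega)/\mu<0$ and is therefore a saddle. The decisive point is to show that the only initial conditions attracted by the saddle are those in the invariant set $\{x=z=0\}$: because the Jacobian at the DFE is Metzler, Perron--Frobenius forces its unstable eigenvector to be (strictly) positive and hence to point into the open positive quadrant, so the one-dimensional stable manifold meets the quadrant only at the origin. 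Consequently every trajectory with $x(0)\ne0$ or $z(0)\ne0$ enters the interior and converges to the EE.

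The step I expect to be the main obstacle is exactly this last one---making the asymptotically autonomous reduction rigorous and excluding the stable manifold of the DFE in the supercritical regime. Both difficulties are ultimately resolved by the cooperative structure rather than by explicit estimates: monotonicity eliminates periodic orbits, and the Perron--Frobenius sign pattern pins down the basin boundary. As an alternative that sidesteps the nonautonomous argument, one could sandwich each interior initial condition between two ordered data whose forward orbits both converge monotonically to the EE, exploiting that the DFE and the EE are ordered equilibria of the cooperative flow.
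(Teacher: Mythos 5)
Your proposal is correct, and it shares the paper's central idea---the change of variables $(x,y,z)\mapsto(x,z,v)$ with $v=y+z$, which makes the Jacobian Metzler and the dynamics cooperative---but after that point the two arguments genuinely diverge. The paper stays in three dimensions: it observes that the auxiliary system \eqref{eq:auxiliary} is monotone on the invariant sets, invokes monotone systems theory to conclude that all trajectories converge to an equilibrium, and then identifies the limit via the local stability analysis of Proposition~\ref{prop:equilibria} (with a separate remark for the boundary case $\sigma_0=1$, where the DFE is the unique equilibrium). You instead exploit the fact that the $v$-equation decouples and converges exponentially, reducing the problem to a planar asymptotically autonomous system whose limit system (with $v\equiv\omega/\mu$) is cooperative and irreducible; planar monotone theory excludes periodic orbits and yields convergence of every bounded trajectory, and the Markus--Thieme theory transfers this back to the true trajectories. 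Your route buys rigor precisely where the paper is loosest: (i) monotonicity alone does not imply that \emph{all} trajectories of a three-dimensional system converge (the general theorem gives only generic or quasi-convergence for strongly monotone systems), whereas planar cooperativity does; and (ii) for $\sigma_0>1$, instability of the DFE does not by itself exclude that some interior trajectory converges to it, whereas your saddle analysis---the stable eigenvector of the irreducible Metzler Jacobian at the DFE must have mixed signs by Perron--Frobenius, so the one-dimensional stable manifold meets the closed positive quadrant only at the origin---pins down the basin exactly. (Your determinant computation, $(\gamma\mu^2-\beta_h\beta_v\omega)/\mu<0$ iff $\sigma_0>1$, is right.)

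One caveat: the stable-manifold exclusion is an argument about the \emph{limit} system, and transferring a basin statement to the nonautonomous $(x,z)$-trajectories is not automatic in the Markus--Thieme framework, which only tells you that each omega-limit set is one of the two equilibria. The alternative you sketch at the end is the clean way to close this: for large $t$ one has $\omega/\mu-\epsilon\leq v(t)\leq\omega/\mu+\epsilon$, so by Kamke's comparison theorem the trajectory is sandwiched between solutions of the two frozen cooperative planar systems, both still supercritical for small $\epsilon$; their interior trajectories converge to their respective endemic equilibria, which tend to the EE in \eqref{eq:ee} as $\epsilon\to0$. This comparison argument can in fact replace the asymptotically autonomous machinery altogether, and would make your proof fully self-contained.
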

\begin{proof}
    We operate a change of variables, where we introduce an auxiliary 3-dimensional system formed by $x(t)$, $z(t)$, and $v(t)=y(t)+z(t)$, which is governed by \eqref{eq:bd}. Being $y(t)=v(t)-z(t)$, we obtain 
    \begin{subequations}\label{eq:auxiliary}
    \begin{align}
        &\dot{x}(t) = -\gamma x(t) + \beta_h (1-x(t)) z(t)\label{eq:auxiliary_x}\\
        &\dot{z}(t) = \beta_v x(t) (v(t)-z(t)) - \mu z(t)\label{eq:auxiliary_z}\\
        &\dot{v}(t) = \omega-\mu v(t)\label{eq:auziliary_v}.
    \end{align}
\end{subequations}
    First, from Lemma~\ref{lemma:invariance}, we derive that the two invariant sets, written in terms of the new variables are $\mathcal D_1:=\{(x,z,v): x,z,v\geq 0, x\leq 1, v\leq \frac{\omega}{\mu}, z\leq v\}$ and $\mathcal D_2:=\{(x,z,v): x,z,v\geq 0, x\leq 1, v\geq \frac{\omega}{\mu}, z\leq v\}$. Second, we prove that all trajectories in $\mathcal D_2$ are bounded (those in $\mathcal D_1$ are necessarily bounded, being $\mathcal D_1$ compact). From \eqref{eq:auziliary_v}, we observe that $\dot{v}(t)\leq0$ for any $v(t)\geq\frac{\omega}{\mu}$, which implies $v(t)\leq\max\{v(0),\omega/\mu\}$, which in turn implies that  trajectories cannot diverge. Third, we compute the Jacobian matrix of \eqref{eq:auxiliary} at the generic point $(x,z,v)$, obtaining
    \begin{equation}\label{eq:auxiliary_jacobian}
        \tilde J(x,z,v)=\begin{bmatrix}
    -\gamma - \beta_hz & \beta_h (1-x) &  0\\
    \beta_v(v-z) & -\beta_vx-\mu & \beta_vx \\
0 & 0 & -\mu\end{bmatrix}.
    \end{equation}    
    We observe that the matrix in \eqref{eq:auxiliary_jacobian} is Metzler, since all its off-diagonal entries  are non-negative for values of the parameters belonging to the two invariant sets. Hence, the dynamical system in \eqref{eq:auxiliary} is monotone~\cite{Hirsch2006}, which implies that all its trajectories converge to a fixed point~\cite{Hirsch2006}. Fourth, since $v(t)$ and $z(t)$ converge, also their difference $y(t)=z(t)-x(t)$ necessarily converges to a fixed point, yielding that also trajectories of \eqref{eq:model} converge. Fifth, the analysis of the local stability of the equilibria in Proposition~\ref{prop:equilibria} yields the claim for $\sigma_0<1$ and $\sigma_0>1$. Finally, we observe that when $\sigma_0=1$, Proposition~\ref{prop:equilibria} does not provide any information on the stability of the equilibria, but it states that the system has a unique equilibrium: the DFE. Combining this with the system's monotonicity (which implies convergence to an equilibrium), we obtain convergence to the DFE also in the case $\sigma_0=1$, yielding the claim.
\end{proof}

Figure~\ref{fig:trejectories} illustrates the results of Theorem~\ref{th:uncontrolled}. If $\sigma_0<1$, then the disease is quickly eradicated and the system converges to the DFE. It is interesting to notice from Fig.~\ref{fig:below} that, unlike classical SIS-models~\cite{Mei2017}, the convergence to the DFE may be non-monotone, with an initial increase in the epidemic prevalence $x(t)$, until a peak is reached and the fraction of infected individuals starts decreasing to $0$. On the contrary, if we increase $\sigma_0$ to reach a value larger than $1$ (e.g., by decreasing $\mu$ as in Fig.~\ref{fig:above}), we enter the endemic regime and trajectories converge to the EE.

\begin{figure}
    \centering
   \subfloat[$\mu=0.2$]{\includegraphics{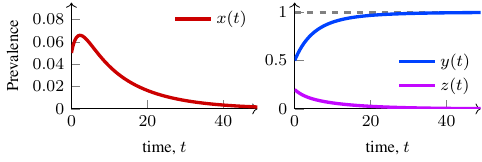}\label{fig:below}}\\
   \subfloat[$\mu=0.1$]{\includegraphics{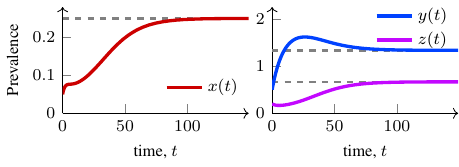}\label{fig:above}}
\caption{Trajectories of the human-vector epidemic model. In (a) $\sigma_0<1$ and the trajectory converges to the DFE; in (b) $\sigma_0>1$ and the trajectory converges to the EE. The equilibrium predicted by Theorem~\ref{th:uncontrolled} in the two cases is depicted with gray dashed horizontal lines. Common parameters are $\omega=\beta_h=\beta_v=0.2$, and $\gamma=0.4$. }
    \label{fig:trejectories}
\end{figure}

\section{Control of the HV-SIS Model}\label{sec:control}

In this section, we consider two distinct control actions that can be implemented in the prevention of vector-borne diseases, and we encapsulate them within the dynamical system in \eqref{eq:model} by introducing two additional terms that capture these actions:
\begin{itemize}
    \item {\bf Vector control} interventions, which focuses on reducing the vector population (e.g., using insecticide-based tools or integrated pest management)~\cite{Wilson2020}; 
    \item Incentives to adopt personal {\bf protection measures}, such as promoting the use of insect repellent,  wear long clothing, and limit outdoor activity~\cite{Alpern2016},
\end{itemize}
Both these interventions have been proven effective in preventing vector-borne diseases~\cite{Wilson2020,Alpern2016}. In the following, we use the mathematical model developed in Section~\ref{sec:model} to analytically assess the effectiveness of these control actions and use them to design an optimal control strategy.

In order to incorporate vector control in the model, we introduce a control parameter $u_1\geq 0$ that captures the efficacy of this control action. In particular, we assume that the death rate of the vector is increased by a parameter $u_1$ thanks to the impact of vector control actions. Then, we introduce a control parameter $u_2\in[0,\beta_h]$ that represents the efficacy of personal protection measures in reducing human contagion rate. Hence, the controlled HV-SIS model is captured by the following system of ODEs:
\begin{subequations}\label{eq:model_controlled}
    \begin{align}
        &\dot{x}(t) = -\gamma x(t) + (\beta_h-u_2) (1-x(t)) z(t)\label{eq:model_controlled_x}\\
        &\dot{y}(t) = \omega - (\mu+u_1) y(t) - \beta_v x(t) y(t)\label{eq:model_controlled_y}\\
        &\dot{z}(t) = \beta_v x(t) y(t) - (\mu+u_1) z(t)\label{eq:model_controlled_z}.
    \end{align}
\end{subequations}

Repeating the same analysis performed for the uncontrolled model in Section~\ref{sec:analysis}, we obtain the following result, whose proof is a corollary of Proposition~\ref{prop:equilibria} and Theorem~\ref{th:uncontrolled}.

\begin{corollary}\label{cor:controlled}
Let 
 \begin{equation}\label{eq:threshold_controlled}
\sigma_c:=\frac{(\beta_h-u_2) \beta_v \omega}{\gamma (\mu+u_1)^2}.
\end{equation}
If $\sigma_c\leq 1$, then all trajectories of the controlled HV-SIS model in \eqref{eq:model_controlled} converge to the DFE 
    \begin{equation}\label{eq:dfe_controlled}
       (x^*_c,y^*_c,z^*_c)=\Big(0, \frac{\omega}{\mu+u_1}, 0\Big).
    \end{equation}
If $\sigma_c> 1$, then all trajectories of the controlled HV-SIS model in \eqref{eq:model_controlled} with initial condition such that $x(0)\neq 0$ or $z(0)\neq 0$ converge to the EE 
     \begin{subequations}\label{eq:ee_controlled}\begin{align}
        &\bar x_c = \frac{\omega  \beta_v (\beta_h-u_2)- (\mu+u_1)^2 \gamma}{\omega \beta_v (\beta_h-u_2)+ (\mu+u_1) \gamma \beta_v}\\
        &\bar y_c= \frac{\gamma (\mu+u_1) + (\beta_h-u_2) \omega}{(\beta_h-u_2) (\beta_v + \mu+u_1)}\\
        &\bar z_c= \frac{\omega  \beta_v (\beta_h-u_2)- (\mu+u_1)^2 \gamma}{(\mu+u_1) (\beta_h-u_2) \beta_v + (\mu+u_1)^2 (\beta_h-u_2)}.
    \end{align}\end{subequations}
\end{corollary}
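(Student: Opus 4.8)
The plan is to observe that the controlled system in \eqref{eq:model_controlled} is structurally identical to the uncontrolled system in \eqref{eq:model} under a reparametrization of two of the model parameters, so that Proposition~\ref{prop:equilibria} and Theorem~\ref{th:uncontrolled} can be invoked verbatim rather than re-deriving everything from scratch. Concretely, I would define the effective parameters $\tilde\beta_h:=\beta_h-u_2$ and $\tilde\mu:=\mu+u_1$, and leave $\gamma$, $\beta_v$, and $\omega$ unchanged. Substituting these into \eqref{eq:model} reproduces exactly the controlled equations \eqref{eq:model_controlled}: the recovery term $-\gamma x$ is untouched, the contagion term becomes $\tilde\beta_h(1-x)z=(\beta_h-u_2)(1-x)z$, and the two vector death terms become $-\tilde\mu y=-(\mu+u_1)y$ and $-\tilde\mu z=-(\mu+u_1)z$, as required.

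Second, I would check that the parameter constraints under which Proposition~\ref{prop:equilibria} and Theorem~\ref{th:uncontrolled} were established are preserved by this substitution. Since $\mu>0$ and $u_1\geq 0$, we have $\tilde\mu=\mu+u_1>0$; since $u_2\in[0,\beta_h]$, we have $\tilde\beta_h=\beta_h-u_2\in[0,\beta_h]$; and $\gamma>0$, $\omega>0$, $\beta_v\geq 0$ are inherited directly. Thus, as long as $u_2<\beta_h$ (so that $\tilde\beta_h>0$), the hypotheses of the earlier results hold with the tilded parameters in place of the original ones, and we may apply them directly.

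Third, applying Theorem~\ref{th:uncontrolled} to the reparametrized system immediately yields the two regimes governed by the threshold $\tilde\sigma_0:=\tilde\beta_h\beta_v\omega/(\gamma\tilde\mu^2)$, which coincides with $\sigma_c$ in \eqref{eq:threshold_controlled} upon resubstituting $\tilde\beta_h$ and $\tilde\mu$. Likewise, substituting the effective parameters into the DFE \eqref{eq:dfe} and the EE \eqref{eq:ee} from Proposition~\ref{prop:equilibria} reproduces \eqref{eq:dfe_controlled} and \eqref{eq:ee_controlled}; this last step is a routine verification that only replaces $\beta_h\mapsto\beta_h-u_2$ and $\mu\mapsto\mu+u_1$ in the corresponding closed-form expressions.

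The one genuine subtlety — and the step I expect to be the main obstacle — is the boundary case $u_2=\beta_h$, for which $\tilde\beta_h=0$ falls outside the standing assumption $\beta_h>0$ of the uncontrolled analysis, so the earlier results do not apply as stated (indeed the EE expressions in \eqref{eq:ee} carry $\beta_h$ in their denominators). I would dispose of this case by direct inspection: with $\tilde\beta_h=0$, \eqref{eq:model_controlled_x} reduces to $\dot x=-\gamma x$, whence $x(t)\to 0$ exponentially; feeding this into \eqref{eq:model_controlled_z} gives $\dot z\to-(\mu+u_1)z$, so $z(t)\to 0$, and \eqref{eq:model_controlled_y} then forces $y(t)\to\omega/(\mu+u_1)$. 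Hence all trajectories converge to the DFE in \eqref{eq:dfe_controlled}, consistent with $\sigma_c=0\leq 1$, and the corollary holds on the full range $u_2\in[0,\beta_h]$.
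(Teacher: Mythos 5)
Your proposal is correct and takes essentially the same route as the paper: the paper proves this corollary precisely by observing that the controlled system \eqref{eq:model_controlled} is the uncontrolled system \eqref{eq:model} under the substitutions $\beta_h\mapsto\beta_h-u_2$ and $\mu\mapsto\mu+u_1$, and then invoking Proposition~\ref{prop:equilibria} and Theorem~\ref{th:uncontrolled}. Your explicit treatment of the degenerate boundary case $u_2=\beta_h$ (where the effective contagion rate vanishes and the earlier results' hypothesis $\beta_h>0$ fails) is a careful refinement that the paper silently glosses over.
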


\begin{figure}
    \centering
\subfloat[Infected individuals, $\bar x_c$]{\includegraphics[]{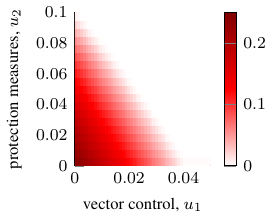}\label{fig:control_x}}\,\,\subfloat[Carrier vectors, $\bar z_c$]{\includegraphics[]{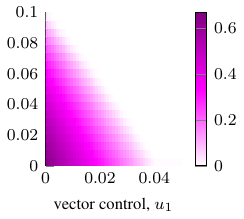}\label{fig:control_z}}    
\caption{(a) Infected individuals and (b) carrier vectors at the EE for different values of the control inputs $u_1$ and $u_2$. Model parameters are $\omega=\beta_h=\beta_v=0.2$, $\gamma=0.4$, and $\mu=0.1$.}
    \label{fig:control}
\end{figure}

The expressions derived in Corollary~\ref{cor:controlled} can be used to assess the performance of the two different control strategies. In Fig.~\ref{fig:control}, we report the value of the fraction of infected individuals at the EE in \eqref{eq:ee_controlled} for different values of the control inputs $u_1$ and $u_2$. The figure suggests that vector control is more effective not only in reducing the epidemic threshold (as observed in Remark~\ref{rem:threshold}), but also in reducing the fraction of infected individuals at the EE. In fact, from Fig.~\ref{fig:control_x}, we observe that when $u_1=0.01$, which means an increase in the vector death rate by just 10\%, the prevalence at the EE decreases by more than 25\%. To obtain the same results using only protective measures, one needs to have $u_2$ that reduces the human infection rate by almost $20\%$. The same relation is observed for the vectors in Fig.~\ref{fig:control_z}.

From Corollary~\ref{cor:controlled} and the following discussion, a problem spontaneously arises. How can one design an optimal control policy in terms of vector control and protection measures to achieve eradication of the outbreak, minimizing the cost of the control strategy? Formally, we can define a cost function $C(u_1,u_2)$, associated with implementing level $u_1$ and $u_2$ of  vector control and protection measures, respectively, for which it is reasonable to make the following assumptions.
\begin{assumption}\label{a:cost}
    The cost function $C(u_1,u_2):[0,\infty)\times[0,\beta_h]\to[0,\infty)$ is a non-negative differentiable function and it is monotonically increasing in $u_1$ and $u_2$.
\end{assumption}

Then, we  formulate the following optimization problem:
\begin{equation}\label{eq:problem}
    \begin{array}{rl}
    (u_1^*,u_2^*)=\arg\min & C(u_1,u_2)\\
       \text{subject to} & (\beta_h-u_2) \beta_v \omega-\gamma (\mu+u_1)^2\leq 0\\
       & u_1,u_2\geq 0,
       \\&u_2\leq \beta_h,
    \end{array}
\end{equation}
where constraint $(\beta_h-u_2) \beta_v \omega-\gamma (\mu+u_1)^2\leq 0$ is obtained from \eqref{eq:threshold_controlled}, by imposing that the DFE is globally asymptotically stable, i.e., imposing $\sigma_c\leq 1$. From the analysis of the optimization problem in \eqref{eq:problem}, we obtain the following result, which provides an explicit way to compute the optimal control policy for the HV-SIS epidemic model.

\begin{theorem}\label{th:control}
Under Assumption~\ref{a:cost}, the optimal solution $(u_1^*,u_2^*)$ of \eqref{eq:problem} solves the following system of equations:
\begin{equation}\label{eq:optimal_general}
    \begin{array}{l}
         \frac{\partial}{\partial u_1}C(u_1,u_2)-2\lambda\gamma(\mu+u_1)=0\\[3pt]
         \frac{\partial}{\partial u_2}C(u_1,u_2)-\lambda\beta_v\omega=0\\[3pt]
         (\beta_h-u_2) \beta_v \omega-\gamma (\mu+u_1)^2=0.
    \end{array}
\end{equation}
\end{theorem}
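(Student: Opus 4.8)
The plan is to treat (\ref{eq:problem}) as a nonlinear program and to derive (\ref{eq:optimal_general}) as its first-order (Karush--Kuhn--Tucker) necessary conditions. Writing $g(u_1,u_2):=(\beta_h-u_2)\beta_v\omega-\gamma(\mu+u_1)^2$ for the epidemic-threshold constraint, I would first record the two structural facts that drive everything: $g$ is strictly decreasing in each argument, since $\partial g/\partial u_1=-2\gamma(\mu+u_1)<0$ and $\partial g/\partial u_2=-\beta_v\omega<0$, so that increasing either control only relaxes $g\le 0$; and, by Assumption~\ref{a:cost}, the cost $C$ is nondecreasing in each argument. These two monotonicities point in opposite directions and together pin the optimizer to the constraint boundary.

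The first key step is to show that the threshold constraint is active at the optimum, which is precisely the third equation of (\ref{eq:optimal_general}). In the endemic regime $\sigma_0>1$ that actually requires intervention, the no-control point $(0,0)$ is infeasible (there $g>0$), so every feasible point has a strictly positive control component. If an optimizer had $g<0$, I would decrease that positive component slightly: by continuity this keeps $g\le 0$, while by monotonicity of $C$ it strictly lowers the cost, contradicting optimality. Hence the optimizer must satisfy $g=0$.

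With the constraint active I would form the Lagrangian $L=C(u_1,u_2)+\lambda\,g(u_1,u_2)$ and invoke KKT stationarity. Since $\nabla g=(-2\gamma(\mu+u_1),\,-\beta_v\omega)\neq 0$ on $\{g=0\}$, the single active inequality satisfies the linear-independence constraint qualification, so the KKT conditions are necessary at the optimum. The stationarity relation $\nabla C+\lambda\nabla g=0$ then reproduces verbatim the first two lines of (\ref{eq:optimal_general}), and the multiplier sign is automatically consistent because $\lambda=(\partial C/\partial u_1)/\bigl(2\gamma(\mu+u_1)\bigr)\ge 0$ by monotonicity of $C$.

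The main obstacle is the box and nonnegativity constraints $u_1\ge 0$, $0\le u_2\le\beta_h$, whose activation would introduce extra multipliers absent from (\ref{eq:optimal_general}). I would therefore argue that, for $\sigma_0>1$, the arc of $\{g=0\}$ lying inside the box runs strictly between its endpoints, so a minimizer falling in the interior of the box is exactly what (\ref{eq:optimal_general}) describes; the degenerate corner policies that saturate a control bound would be treated separately and are not captured by (\ref{eq:optimal_general}) alone. A secondary subtlety is that the feasible set need not be convex, since $\{g\le 0\}$ is the region above the concave boundary curve $u_1=\sqrt{(\beta_h-u_2)\beta_v\omega/\gamma}-\mu$; this, however, is harmless for the statement, which asserts only that the optimum \emph{solves} the system (a necessary condition), not that every solution of the system is optimal.
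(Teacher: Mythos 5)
Your proposal is correct and follows essentially the same route as the paper's proof: a monotonicity-based perturbation argument showing that the threshold constraint $g(u_1,u_2)=0$ must be active at any optimum (in the regime $\sigma_0>1$ where intervention is needed), followed by stationarity of the Lagrangian $\mathcal{L}=C+\lambda g$, which yields exactly \eqref{eq:optimal_general}. The additional care you take---verifying the constraint qualification via $\nabla g\neq 0$ and flagging that active bounds $u_1\geq 0$, $u_2\in[0,\beta_h]$ would introduce extra multipliers absent from \eqref{eq:optimal_general}---goes beyond the paper, whose proof silently ignores the box constraints; that concern is in fact genuine, since the optima of Corollary~\ref{cor:linear} are corner solutions (with $u_1^*=0$ or $u_2^*=0$) at which the two stationarity equations of \eqref{eq:optimal_general} cannot in general hold simultaneously with a single multiplier $\lambda$.
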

\begin{proof}
    First, we observe that the problem is always feasible. In fact, $u_1=0$ and $u_2=\beta_h$ is a solution that satisfies all the constraints. Then, we prove that the minimum of \eqref{eq:problem} is attained for values of the control inputs $u_1$ and $u_2$ that either are both equal to $0$, or they satisfy the equality constraint $(\beta_h-u_2) \beta_v \omega-\gamma (\mu+u_1)^2=0$. To prove this statement, let us define $g(u_1,u_2)=(\beta_h-u_2) \beta_v \omega-\gamma (\mu+u_1)^2$. If $g(0,0)\leq 0$ (which is equivalent to $\sigma_0\leq 1$), then the monotonicity of $C$ implies that the minimum is  attained at $u_1^*=u_2^*=0$. If $g(0,0)>0$, assume that $(u_1^*,u_2^*)$ is the optimal solution of \eqref{eq:problem} and that $g(u_1^*,u_2^*)< 0$. The cost function at the optimal solution is equal to $C(u_1^*,u_2^*)$. If $u_1^*>0$, we can define $\tilde u_1^*(\zeta)=u_1^*-\zeta$. By continuity, being $g(u_1^*,u_2^*)<0$ there exists $\Delta u>0$ such that $g(\tilde u_1^*(\Delta u),u^*_2)\leq0$. Clearly $(\tilde u_1^*(\Delta u),u^*_2)$ is then a feasible solution of \eqref{eq:problem}, and $C(\tilde u_1^*(\Delta u),u^*_2)<C(u_1,u^*_2)$, due to the monotonicity of the cost function, which contradicts the assumption that $(u_1^*,u_2^*)$ is the optimal solution of \eqref{eq:problem}, yielding the claim.  If $u_1^*=0$, the same argument holds letting $\tilde u_2^*(\zeta)=u_2^*-\zeta$.

Once we know that the optimal solution is attained at the boundary $g(u_1,u_2)=0$, we use Lagrange multipliers to solve the optimization problem~\cite{Bertsekas1995programming}, by writing the Lagrangian function
\begin{equation}\label{eq:lagrangian}
\begin{array}{l}
    \mathcal L(u_1,u_2,\lambda)=C(u_1,u_2)+\lambda g(u_1,u_2)
\\
=C(u_1,u_2)+\lambda((\beta_h-u_2) \beta_v \omega-\gamma (\mu+u_1)^2).
    \end{array}
\end{equation}
Finally, a necessary condition for optimality is that the solution should solve the nonlinear system obtained by posing the partial derivatives of \eqref{eq:lagrangian} to $0$~\cite{Bertsekas1995programming}, yielding \eqref{eq:optimal_general}
\end{proof}
In general, \eqref{eq:optimal_general} can have multiple solutions, and thus being a solution of \eqref{eq:optimal_general} is only a necessary condition for optimality. However, in the special case in which the cost function is linear, we can derive a close-form expression for the optimal solution of the control problem in \eqref{eq:problem}.
\begin{corollary}\label{cor:linear}
    Assume that $C(u_1,u_2)=c_1u_1+c_2u_2$, where $c_1> 0$ and $c_2>$ are positive constants that weight the cost for implementing vector control and personal protection measures, respectively. Then, the optimal solution $(u_1^*,u_2^*)$ of \eqref{eq:problem} is given by
\begin{subequations}\label{eq:optimal}
\begin{align}
    &u_1^*=\left\{\begin{array}{ll}\sqrt{\frac{\beta_h\beta_v\omega}{\gamma}}-\mu&\text{if }\sigma_0>1\text{ and }(c_1,c_2)\notin\mathcal C\\
    0&\text{otherwise,}
    \end{array}\right.\label{eq:optimal_1}\\
     &u_2^*=\left\{\begin{array}{ll}\frac{\beta_h\beta_v\omega-\gamma\mu^2}{\beta_v\omega}&\text{if }\sigma_0>1\text{ and }(c_1,c_2)\in\mathcal C\\
    0&\text{otherwise,}
    \end{array}\right.\label{eq:optimal_2}
    \end{align}
\end{subequations}
where
\begin{equation}\label{eq:set_control}
    \mathcal C:=\left\{(c_1,c_2):\frac{c_1}{c_2}>\frac{2\gamma\mu}{\beta_v\omega},
    \Big(\frac{c_1}{c_2}\beta_v\omega-\gamma\mu\Big)^2\geq \beta_h\beta_v\gamma\omega\right\}.
\end{equation} 
\end{corollary}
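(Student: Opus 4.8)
The plan is to separate along the epidemic threshold and, in the nontrivial regime, collapse the two-variable boundary optimization to a scalar concave problem. First I would dispose of the case $\sigma_0\le 1$: here $g(0,0)=\beta_h\beta_v\omega-\gamma\mu^2\le 0$, so $(0,0)$ is feasible, and since $C$ is non-negative and monotonically increasing with $C(0,0)=0$, the pair $u_1^*=u_2^*=0$ is globally optimal, matching the two ``otherwise'' branches of \eqref{eq:optimal}.

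For $\sigma_0>1$, Theorem~\ref{th:control} already tells me that every minimizer lies on the active constraint $(\beta_h-u_2)\beta_v\omega=\gamma(\mu+u_1)^2$. I would parametrize this curve by $u_1$, solving $u_2=\beta_h-\gamma(\mu+u_1)^2/(\beta_v\omega)$; the requirement $u_2\ge 0$ confines $u_1$ to $[0,u_1^{\max}]$ with $u_1^{\max}=\sqrt{\beta_h\beta_v\omega/\gamma}-\mu$, which is strictly positive exactly because $\sigma_0>1$. Substituting into the linear cost gives the scalar function
\begin{equation*}
f(u_1)=c_1u_1+c_2\beta_h-\frac{c_2\gamma(\mu+u_1)^2}{\beta_v\omega},
\end{equation*}
and the decisive observation is that $f''(u_1)=-2c_2\gamma/(\beta_v\omega)<0$, so $f$ is strictly concave. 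A concave function on a compact interval attains its minimum at an endpoint, hence the optimum is one of the two corners: the protection-only corner $(0,u_2^{\max})$ with $u_2^{\max}=(\beta_h\beta_v\omega-\gamma\mu^2)/(\beta_v\omega)$, or the vector-control-only corner $(u_1^{\max},0)$.

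It then remains to compare the corner costs $C_{\mathrm{prot}}=c_2u_2^{\max}$ and $C_{\mathrm{vec}}=c_1u_1^{\max}$. Writing $A:=\sqrt{\beta_h\beta_v\omega/\gamma}$, so that $u_1^{\max}=A-\mu$ and $\beta_h\beta_v\omega-\gamma\mu^2=\gamma(A-\mu)(A+\mu)$, the inequality $C_{\mathrm{prot}}\le C_{\mathrm{vec}}$ reduces, after cancelling the strictly positive factor $A-\mu$, to $c_2\gamma(A+\mu)\le c_1\beta_v\omega$. Rearranging gives $c_1\beta_v\omega/c_2-\gamma\mu\ge\gamma A>0$; squaring this positive quantity yields $(c_1\beta_v\omega/c_2-\gamma\mu)^2\ge\beta_h\beta_v\gamma\omega$ (using $\gamma^2A^2=\beta_h\beta_v\gamma\omega$), and since $A>\mu$ the same bound gives $c_1\beta_v\omega/c_2\ge\gamma(A+\mu)>2\gamma\mu$, i.e.\ $c_1/c_2>2\gamma\mu/(\beta_v\omega)$. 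These are precisely the two defining inequalities of $\mathcal C$ in \eqref{eq:set_control}, and the steps are reversible, so the protection-only corner wins iff $(c_1,c_2)\in\mathcal C$ and the vector-control-only corner wins otherwise, establishing \eqref{eq:optimal}.

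The main obstacle I anticipate is conceptual rather than algebraic: because the reduced cost is concave, the minimizer is a \emph{corner} of the feasible segment, not the interior stationary point $\mu+u_1=c_1\beta_v\omega/(2c_2\gamma)$ produced by the Lagrange conditions \eqref{eq:optimal_general}---that stationary point is in fact a constrained \emph{maximum}, so one must argue explicitly that it is discarded. A second point requiring care is the sign bookkeeping in the squaring step: the first inequality of $\mathcal C$ is exactly what certifies $c_1\beta_v\omega/c_2-\gamma\mu>0$ (and it coincides with the threshold $u_1>0$ for the interior Lagrange point), and it is automatically implied by the second inequality once $\sigma_0>1$ is in force.
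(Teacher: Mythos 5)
Your proof is correct, but it takes a genuinely different---and in fact more rigorous---route than the paper's. The paper's own proof of Corollary~\ref{cor:linear} is a single line: it asserts that \eqref{eq:optimal} is ``the unique solution of \eqref{eq:optimal_general}'' once $C(u_1,u_2)=c_1u_1+c_2u_2$ is substituted. Your corner analysis shows why that one-liner cannot be taken literally: for a linear cost, the stationarity equations in \eqref{eq:optimal_general} produce the interior point $\mu+u_1^{}=c_1\beta_v\omega/(2c_2\gamma)$ on the constraint curve, which, as you observe, is a constrained \emph{maximum} of the reduced (strictly concave) cost, while the true minimizers are the two corners $(u_1^{\max},0)$ and $(0,u_2^{\max})$---and these corners generically do \emph{not} satisfy the first two equations of \eqref{eq:optimal_general}, since they would require additional KKT multipliers for the active constraints $u_1\geq 0$ or $u_2\geq 0$. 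Your argument---restrict to the active constraint via Theorem~\ref{th:control}, parametrize it by $u_1$, note that the reduced cost is strictly concave so its minimum over the segment is attained at an endpoint, then compare the two corner costs, with the sign-checked, reversible squaring step that produces exactly the set $\mathcal C$ in \eqref{eq:set_control}---is complete; the treatment of the $\sigma_0\leq 1$ case and the observation that, under $\sigma_0>1$, the second inequality defining $\mathcal C$ implies the first are both correct. What the paper's approach buys is brevity and formal continuity with Theorem~\ref{th:control}; what yours buys is an actual verification that the claimed points are minimizers rather than mere stationary points, which is needed here precisely because the Lagrange stationary point is not the optimum.
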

\begin{proof}
    \eqref{eq:optimal} is obtained as the unique solution of \eqref{eq:optimal_general} for $C(u_1,u_2)=c_1u_1+c_2u_2$. 
\end{proof}

\begin{figure}
    \centering
    \subfloat[$\beta_h=0.4$, $\beta_v=0.1$]{\includegraphics[]{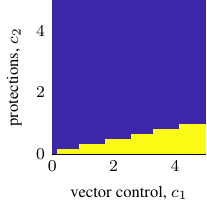}}\quad
    \subfloat[$\beta_h=0.1$, $\beta_v=0.4$]{\includegraphics[]{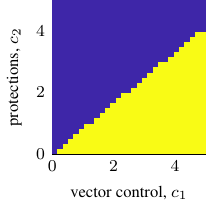}}
\caption{In the blue area, vector control is preferable; in the yellow area, personal protection measures are preferable, according to Corollary~\ref{cor:linear} Common model parameters are $\omega=0.2$, $\gamma=0.4$, and $\mu=0.1$.}
    \label{fig:optimal}
\end{figure}

The results in Corollary~\ref{cor:linear} suggest that, if the cost for implementing intervention policies grows linearly in the effectiveness of the intervention, then it is always beneficial to focus on implementing only one of the two types of policies. Which policy to implement depends on the model parameters and on the ratio between the cost for implementing the two control actions, as illustrated in Fig.~\ref{fig:optimal}. Note that, as the vector infection rate $\beta_v$ becomes larger, the region in which incentivizing protection measures is preferable becomes larger. However, even when $\beta_v$ is four times larger than $\beta_h$, the region in which vector control is more effective is larger.

\section{Conclusion}\label{sec:conclusion}

In this paper, we have proposed and analyzed a novel model for the spread of vector-borne diseases. The model, built using a system of ODEs, accounts for human and  vector contagion, as well as  for the vital dynamics of the vectors. Using systems theoretic tools, we have studied the asymptotic behavior of the system, characterizing a phase transition between a regime where the DFE is globally asymptotically stable and a regime where the system converges to a (unique) EE. Then, by introducing two control actions in the equations, we have analytically assessed the effectiveness of vector control interventions and incentives for humans to adopt protection measures. 

These preliminary results pave the way for several lines of future research. First, the control strategies proposed in Section~\ref{sec:control} are assumed constant. Future research should focus on designing dynamical control strategies, following the approaches developed in~\cite{Nowzari2016,zino2021survey}. Moreover, despite Theorem~\ref{th:control} applies to quite general cost functions, we have focused our discussion on the linear scenario, for which a closed-form expression for the optimal control can be easily derived. Nonlinear cost functions that accounts, e.g., for diminishing returns in the effectiveness of interventions should be investigated. Second, embedding the model on a network structure is a key future step to gain insights into the impact of geographical displacement of humans and vectors on vector-borne diseases. Finally, the HV-SIS model can be coupled with more realistic models of human behavior, e.g., using game theory~\cite{ye2021game,
frieswijk2022mean,hota2022,Paarporn2023},  to develop a more realistic framework to study interventions.
    

\end{document}